\documentclass{amsart}
\usepackage{amsmath,enumerate,enumitem,amsfonts,color,amsthm,hyperref,bbm} 

\usepackage{appendix}

\title[Exponential Hedging for the Ornstein-Uhlenbeck Process]{Exponential Hedging for the Ornstein-Uhlenbeck Process in the Presence of Linear Price Impact}

\author{Yan Dolinksy} 
\address{Department of Statistics, Hebrew University}
\email{yan.dolinsky@mail.huji.ac.il}

\date{\today}

\numberwithin{equation}{section}  
\newtheorem{defn}{Definition}[section]

\newtheorem{remark}[defn]{Remark}
\newtheorem{thm}[defn]{Theorem}

\newtheorem{lem}[defn]{Lemma}

\begin{document}

\begin{abstract}
In this work we study a continuous time exponential utility maximization problem in the presence of a linear temporary price impact.
More precisely, for the case where the risky asset is given by the Ornstein-Uhlenbeck diffusion process we compute the optimal portfolio strategy and the corresponding value. 
Our method of solution relies on duality, and it is purely probabilistic.
\end{abstract}

\keywords{Exponential Utility Maximization, Linear Price Impact, Ornstein-Uhlenbeck Process.}

\maketitle

\section{Introduction and the Main Result}\label{sec:1}

We fix a time horizon $T<\infty$ and consider a financial market which consists of a safe asset earning zero interest rate and of a risky asset $S_t$, $t\in [0,T]$ that follows 
the Ornstein-Uhlenbeck diffusion process 
\begin{equation}\label{model}
dS_t=\left(\mu-S_t\right) dt+dW_t,  
\end{equation}
where $\mu\in\mathbb R$ is the long-term mean and 
$W_t$, $t\geq 0$ is a 
standard, one-dimensional Brownian motion defined on a probability space $(\Omega,\mathcal F,\mathbb P)$ 
and endowed with the augmented natural filtration $\mathcal F_t$, $t\geq 0$. As usual, we assume that the initial stock price is a given constant $S_0$.
Without loss of generality we assume that $\mathcal F=\mathcal F_T$.

Our motivation to study optimal investments with a mean reverting risky asset comes from the recent paper \cite{GNR}
where 
the authors considered asymptotically (as the time horizon goes to infinity) optimal investments 
for mean reverting risky assets.
Although the authors considered a more general type of mean reverting models then the linear 
Ornstein-Uhlenbeck diffusion, they assumed that the trading is frictionless.  

In financial markets, trading moves prices against the trader:
buying faster increases execution prices, and selling faster decreases them.
This aspect of liquidity, known as market depth
\cite{B} or price-impact has recently received increasing attention
(see, for instance,
\cite{AFS:2010,BCE:2021,CHM:20,FSU:19,MMS:17,N:20} and the references therein).

Following
\cite{AlmgrenChriss:01}, we model this impact in a temporary linear
form and, thus, when at time $t$ the investor turns over her position $\Phi_t$ at
the rate $\phi_t=\dot{\Phi}_t$ the execution price is $S_t+\frac{\phi_t}{2\delta}$
for some constant $\delta>0$ which denotes the market depth.
 As a result, the profits and
losses from trading are given by
\begin{equation}\label{por}
V^{\Phi_0,\phi}_T:=\Phi_0(S_T-S_0)+\int_{0}^T \phi_t(S_T-S_t)dt-\frac{1}{2\delta} \int_{0}^T \phi^2_t dt,
\end{equation}
where, for convenience, we assume that the investor marks to market her position
$\Phi_T = \Phi_0+\int_0^T \phi_t dt$ in the risky asset that she has acquired by time
$T>0$.

The natural class of admissible strategies
is given by 
\begin{align*}
\mathcal A:=\left\{\phi=(\phi_t)_{t\in [0,T]}: \ \phi \text{ is } \
  \mathcal F_t\text{-optional with } \int_{0}^T \phi^2_t dt<\infty
  \ \text{ a.s.}\right \}.
\end{align*}

Let us explain formula (\ref{por}) in more detail. At time $0$, the investor holds $\Phi_0$ shares of the stock and an amount $-\Phi_0 S_0$ in the savings account.
At time $t \in [0,T]$, the investor purchases $\phi_t\,dt$, an infinitesimal number of shares. Consequently, the (infinitesimal) change in the savings account is
$
-\phi_t\left(S_t+\frac{\phi_t}{2\delta}\right)\,dt .
$
The terminal portfolio value consists of the terminal balance in the savings account together with the value of the stock holdings. Hence it is given by
$$-\Phi_0 S_0-\int_0^T \phi_t\left(S_t+\frac{\phi_t}{2\delta}\right)\,dt+\Phi_T S_T.$$
Since
$
\Phi_T=\Phi_0+\int_{0}^T \phi_t\,dt,
$
we obtain the right-hand side of (\ref{por}).

The investor’s preferences are described by an exponential utility function
$u(x)=-\exp(-\alpha x)$, $x\in\mathbb R$,
 with absolute risk aversion parameter $\alpha>0$, and her goal is to
\begin{align}\label{problem}
\text{Maximize } \mathbb E_{\mathbb P}\left[u(V^{\Phi_0,\phi}_T)\right]=\mathbb
  E_{\mathbb P}\left[-\exp\left(-\alpha V^{\Phi_0,\phi}_T\right)\right]\text{ over
  } {\phi\in\mathcal A},
\end{align}
where $\mathbb E_{\mathbb P}$ denotes the expectation with respect to the underlying probability measure $\mathbb P$. 
Observe that for any $(\Phi_0,\phi)\in \mathbb{R}\times\mathcal{A}$, we have
$
\alpha V_T^{\Phi_0,\phi}=\tilde V_T^{\alpha\Phi_0,\alpha\phi},
$
where $\tilde V^{\cdot}_{\cdot}$ is defined by (\ref{por}) with market depth $\tilde\delta:=\alpha\delta$. Thus, without loss of generality, we 
take $\alpha=1$.

The paper's main result is the following solution to the optimization
problem (\ref{problem}). For simplicity we assume that the initial number of shares is $\Phi_0=0$.
\begin{thm}\label{thm1.1}
Set
$$
\mathcal V(t):=\frac{\sqrt{ 1+\delta} \left(\sinh\left(\sqrt{1+\delta} t \right) + \sqrt{1+\delta} \left(1+\delta+\delta t\right) \cosh\left(\sqrt{1+\delta} \, t\right)\right)}{\sqrt{1+\delta} \left(1+\delta+\delta t\right) \sinh\left(\sqrt{1+\delta} \, t\right) + \left(1+\delta^2\right) \cosh\left(\sqrt{1+\delta} \, t\right) + 2{\delta}}
-1, \ \ t\geq 0.
$$
The utility maximization problem~\eqref{problem} has a unique optimal 
hedging strategy $\hat \phi_t$, $t\in [0,T]$  which is given by 
the feedback form
\begin{equation}\label{port}
\hat\phi_t=\delta\frac{\left(1+\frac{1-\delta}{(1+\delta)^{3/2}}
\tanh\left(\frac{\sqrt{1+\delta}(T-t)}{2}\right)+\frac {\delta (T-t) }{1+\delta}
\right)(\mu-S_t)-\hat\Phi_t}{1+\frac {\delta (T-t) }{1+\delta}+\frac{\sqrt{1+\delta}}{\sinh(\sqrt{1+\delta} (T-t))}+\frac{1+\delta^2}{(1+\delta)^{3/2}}\tanh\left(\frac{\sqrt{1+\delta}(T-t)}{2}\right)}
\end{equation}
where $\hat\Phi_t:=\int_{0}^t\hat\phi_s ds$, $t\in [0,T].$
The corresponding value is equal to 
\begin{equation}\label{value}
\mathbb
  E_{\mathbb P}\left[-\exp\left(- V^{0,\hat\phi}_T\right)\right]=-\exp\left(-\frac{1}{2}\mathcal V(T)(\mu-S_0)^2-\frac{1}{2}\int_{0}^T \mathcal V(t)dt\right).
  \end{equation}
  \end{thm}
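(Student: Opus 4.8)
The plan is to exploit the fact that, after a routine reduction, problem~\eqref{problem} is linear--quadratic--Gaussian, so its value is forced to have the exponential-of-a-quadratic shape asserted in~\eqref{value}, and then to pin down the coefficients by a martingale (weak-duality) argument. First I would rewrite the profit and loss: since $\hat\Phi_t=\int_0^t\hat\phi_s\,ds$ with $\Phi_0=0$, stochastic Fubini gives $\int_0^T\phi_t(S_T-S_t)\,dt=\int_0^T\Phi_t\,dS_t$, so that $V^{0,\phi}_T=\int_0^T\Phi_t\,dS_t-\frac{1}{2\delta}\int_0^T\phi_t^2\,dt$. Writing $X_t:=\mu-S_t$, which is itself an Ornstein--Uhlenbeck process $dX_t=-X_t\,dt-dW_t$ with $X_0=\mu-S_0$, turns $dS_t=X_t\,dt+dW_t$ and exposes $(X_t,\Phi_t)$ as a linear diffusion controlled through $d\Phi_t=\phi_t\,dt$, with a quadratic running cost. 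This already explains the Gaussian form of~\eqref{value}.

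Next I would run the martingale optimality principle, which here plays the role of the probabilistic dual. Look for $G(t,x,\Phi)=\tfrac12 A(t)x^2+B(t)x\Phi+\tfrac12 C(t)\Phi^2+D(t)$ with $A(T)=B(T)=C(T)=D(T)=0$ such that $R^\phi_t:=-\exp\!\big(-V^{0,\phi}_t-G(t,X_t,\Phi_t)\big)$ is a supermartingale for every $\phi\in\mathcal A$ and a true martingale for a distinguished $\hat\phi$. Since $R^\phi_0=-\exp(-G(0,X_0,0))$ and $R^\phi_T=-\exp(-V^{0,\phi}_T)$, this sandwiches $\mathbb E_{\mathbb P}[-\exp(-V^{0,\phi}_T)]\le -\exp(-G(0,X_0,0))$ with equality at $\hat\phi$: exactly the weak/strong duality pair. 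An It\^o expansion shows the drift of $R^\phi$ equals $-R^\phi_t\,\mathcal D_t(\phi)$ with $\mathcal D_t(\phi)$ concave quadratic in $\phi$; imposing $\sup_\phi\mathcal D_t(\phi)\equiv0$ forces the pointwise maximizer $\hat\phi_t=\delta\,G_\Phi(t,X_t,\hat\Phi_t)=\delta\big(B(t)(\mu-S_t)+C(t)\hat\Phi_t\big)$ --- the feedback form~\eqref{port} with $C=-1/\gamma$ and $B=\beta/\gamma$, $\beta,\gamma$ the numerator coefficient and denominator appearing there --- and yields the Riccati system
\begin{align*}
\dot A&=2A+A^2-\delta B^2, & \dot B&=-1-A+B+AB-\delta BC,\\
\dot C&=(1-B)^2-\delta C^2, & \dot D&=-\tfrac12 A,
\end{align*}
run backward from the zero terminal data at $t=T$.

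The crux, and the step I expect to be the main obstacle, is solving this coupled system in closed form and recognizing the result. Passing to the remaining-time variable $\tau=T-t$ (so the data become zero initial conditions) and applying the standard linearizing substitution should convert it into a constant-coefficient linear problem whose characteristic frequency is $\sqrt{1+\delta}$; this is the origin of the $\sinh,\cosh,\tanh$ of $\sqrt{1+\delta}\,\tau$ in~\eqref{port} and in $V$. I would then check that the coefficient $A$ satisfies $A(\tau)=V(\tau)$, so that $A(0)=V(T)$, and integrate $\dot D=-\tfrac12 A$ with $D(T)=0$ to get $D(0)=\tfrac12\int_0^T A=\tfrac12\int_0^T V(t)\,dt$, reproducing~\eqref{value} exactly; reading off the closed forms of $B$ and $C$ then gives $\beta$ and $\gamma$ and hence~\eqref{port}.

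Finally I would make the martingale argument rigorous and settle uniqueness. I would verify that the candidate feedback SDE for $(\hat X_t,\hat\Phi_t)$ has a strong solution with $\hat\phi\in\mathcal A$; that $R^\phi$ is a \emph{genuine} supermartingale for every admissible $\phi$ with finite objective --- here Fatou alone is insufficient, so one must control the local-martingale part and, more delicately, bound the exponential moments of the quadratic functionals of the Gaussian state, which is where the signs and sizes of the Riccati coefficients must be checked so that $\mathbb E_{\mathbb P}\big[\exp(-V^{0,\phi}_t-G)\big]<\infty$ --- while strategies with infinite exponential moment are automatically suboptimal; and that $R^{\hat\phi}$ is a true martingale, giving equality. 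Uniqueness then follows from the strict concavity of $\phi\mapsto\mathcal D_t(\phi)$, which makes the pointwise maximizer, hence the optimal control, unique. The entire scheme stays purely probabilistic --- It\^o calculus and martingale inequalities, with the supermartingale bound serving as the dual --- with no appeal to PDE or viscosity-solution theory.
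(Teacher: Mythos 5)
Your route is genuinely different from the paper's: you attack the primal problem with a martingale optimality principle for the LQG-with-exponential-criterion structure, whereas the paper passes to the dual (Proposition A.2 of \cite{BDR:22}), represents dual densities through It\^o expansions of their Girsanov kernels, and reduces everything to a \emph{deterministic} calculus-of-variations problem solved by Euler--Lagrange (Lemmas \ref{lem2}--\ref{lem3}). Your reduction $V^{0,\phi}_T=\int_0^T\Phi_t\,dS_t-\frac{1}{2\delta}\int_0^T\phi_t^2\,dt$ is correct, the It\^o computation of the drift of $R^\phi$ is right, the pointwise maximizer $\hat\phi=\delta G_\Phi$ matches \eqref{port} under the identification $B=\beta/\gamma$, $C=-1/\gamma$, and your Riccati system is the correct one (I checked the coefficient matching and the terminal consistency as $T-t\to0$). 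In principle this verification approach works and is arguably more self-contained than invoking an external duality theorem.

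However, two steps are asserted rather than proved, and they are precisely where the content of the theorem lies. First, the closed-form solution of the coupled three-dimensional Riccati system: saying that the ``standard linearizing substitution should convert it into a constant-coefficient linear problem'' is a plan, not a proof. The Hamiltonian matrix here has a degenerate (Jordan-block) eigenvalue $0$ alongside $\pm\sqrt{1+\delta}$ --- this is what produces the mixed polynomial-times-hyperbolic terms such as $\delta(T-t)/(1+\delta)$ in \eqref{port} --- so the linearization is messier than the generic case, and one must actually carry it out and verify that $A(t)=V(T-t)$, $B=\beta/\gamma$, $C=-1/\gamma$. The paper avoids this entirely: its Euler--Lagrange ODE is scalar with constant coefficients, and the awkward terms emerge from an explicit one-dimensional quadratic minimization over the ``total trade'' variable $z$. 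Second, and more seriously, the supermartingale verification over all of $\mathcal A$ (which requires only $\int_0^T\phi_t^2\,dt<\infty$ a.s.) is not established. You correctly note that Fatou fails (the local supermartingale $R^\phi$ is bounded above, not below), but your proposed remedy --- bounding ``exponential moments of the quadratic functionals of the Gaussian state'' --- only applies to the candidate $\hat\phi$ and to feedback controls: for a general admissible $\phi$ the process $\Phi$ is \emph{not} Gaussian, so these moment bounds are unavailable, and the observation that strategies with objective value $-\infty$ are suboptimal does not address admissible strategies with finite objective for which $R^\phi$ is only a local supermartingale. You need either a density/approximation argument reducing $\mathcal A$ to a class of well-behaved controls, or the entropic duality bound itself --- which is exactly what the paper imports to dispose of admissibility, attainment, and uniqueness in one stroke. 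Uniqueness in your scheme likewise rests on the unproven supermartingale property, since strict concavity of $\mathcal D_t$ only identifies the optimizer among controls for which the verification inequality is valid.
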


Let us collect some financial-economic observations from this result.
First, direct calculations yield that the derivative of $\mathcal V$ is equal to
\begin{align*}
\dot{\mathcal V}(t)=\frac{\delta(1+\delta)\left(A(t)+B(t)+\delta^2\sinh^2\left(\sqrt{1+\delta}t\right)\right) }{\left(\sqrt{1+\delta} \left(1+\delta+\delta t\right) \sinh\left(\sqrt{1+\delta} \, t\right) + \left(1+\delta^2\right) \cosh\left(\sqrt{1+\delta} \, t\right) + 2{\delta}\right)^{2}}
\end{align*}
where
\begin{align*}
&A(t):=2(1+\delta+\delta t)\sqrt{1+\delta}\sinh\left(\sqrt{1+\delta}t\right)-2(1+\delta)^2 t\\
&B(t):=2(1+\delta)\cosh\left(\sqrt{1+\delta}t\right)-2(1+\delta)-(\delta+\delta^2) t^2.
\end{align*}
Clearly, $\sinh\left(\sqrt{1+\delta}t\right)\geq\sqrt{1+\delta}t$ and
$\cosh\left(\sqrt{1+\delta}t\right)\geq 1+\frac{1+\delta}{2} t^2$ for all $t$. Thus, $A,B\geq 0$.
We conclude that $\dot{\mathcal V}(t)>0$ for $t>0$. 
Since $\mathcal V(0)=0$ we get $\mathcal V(t)>0$ for $t>0$. 
Hence,
from (\ref{value}) the optimal expected utility is increasing in the following two parameters:
I. Time Horizon $T$. II. The term $|\mu-S_0|$, namely the distance between the initial stock price and the long-term mean of the 
Ornstein-Uhlenbeck diffusion process. 

Obviously, for $(\Phi_0,\phi)\in \mathbb{R}\times\mathcal{A}$, the corresponding portfolio is increasing in the market depth $\delta$. Consequently, the value of the optimization problem~\eqref{problem} is increasing in the market depth~$\delta$. As $\delta \downarrow 0$, we have $\mathcal{V}(t) \downarrow 0$, and hence the corresponding expected utility converges to $-1$. This is intuitively clear: as the market depth approaches zero, any nonzero trading strategy incurs an infinite penalty, making the zero strategy optimal.

On the other hand, as the market depth tends to infinity, the function $\mathcal{V}(t)$ converges to~$t$. Consequently, the value converges to
$
-e^{-\frac{1}{2}T|\mu - S_0|^2 - \frac{1}{2}T^2},
$
which coincides with the value of the frictionless utility maximization problem (see Section~3 in~\cite{GNR}).

Next, for a fixed $\delta>0$, we have $\lim_{t\rightarrow\infty} \mathcal V(t)=\sqrt{1+\delta}-1$.
Thus, the certainty equivalent 
$c(T):= -\log\left(\mathbb E_{\mathbb P}\left[\exp\left(- V^{0,\hat\phi}_T\right)\right]\right)$ (the optimal strategy $\hat\phi$ depends on the time horizon $T$)
satisfies 
\begin{equation}\label{limit}
\lim_{T\rightarrow\infty} \frac{c(T)}{T}=\frac{\sqrt{1+\delta}-1}{2}.
\end{equation}
Let us notice that for the frictionless case (see Section 3 in \cite{GNR}) the certainty equivalent is of order $T^2$ (for large $T$).

The feedback 
description (\ref{port}) says that the optimal trading strategy 
is a mean reverting strategy towards the process 
\begin{equation*}
X_t:=\left(1+\frac{1-\delta}{(1+\delta)^{3/2}}
\tanh\left(\frac{\sqrt{1+\delta}(T-t)}{2}\right)+\frac {\delta (T-t) }{1+\delta}
\right)(\mu-S_t), \ \ t\in [0,T].
\end{equation*}
This process depends linearly on the difference between the long-term mean $\mu$ and the current stock price $S_t$. 
If the current time $t$ is close to the maturity date $T$, the value of $X_t$ is close $\mu-S_t$. 
If $T-t$ is large then $X_t$ is of order $\frac{\delta(T-t)}{1+\delta}(\mu-S_t)$.
The denominator in (\ref{port}) describes
how fast (as a function of time) the investor should trade towards the process $X$. Notice that the  denominator is large if the time left until the maturity date $T-t$ is large or small. 
Namely, in both of these cases the investor should trade slowly towards the process $X$.

For the case where the market depth goes to infinity, i.e. $\delta\rightarrow\infty$, the process $X$
converges (almost surely) to $(1+T-t)(\mu-S_t)$, $t\in [0,T]$. 
Observe that the denominator in (\ref{port}) is of order $\sqrt \delta$ (for large $\delta$) and so, for the case where 
the friction vanishes the optimal number of shares $\hat\Phi_t$ 
converges to $(1+T-t)(\mu-S_t)$ which is exactly the optimal strategy for 
the frictionless utility maximization problem (see Section 3 in \cite{GNR}). 

We end this section with the following two remarks.

\begin{remark}
For a simplified presentation and economy of notation, we assume that the mean-reversion rate and the volatility of the Ornstein--Uhlenbeck diffusion process are normalized to $1$. Observe that for given $\kappa,\sigma>0$, the process
$
Y_t := \frac{\sigma}{\sqrt{\kappa}} S_{\kappa t}$, $t\geq 0$
satisfies the SDE
\[
dY_t = \kappa\left(\frac{\sigma \mu}{\sqrt{\kappa}} - Y_t\right) dt + \sigma d\hat W_t,
\qquad 
Y_0 = \frac{\sigma S_0}{\sqrt{\kappa}},
\]
where $\hat W$ is a standard Brownian motion defined by
$
\hat W_t := \frac{1}{\sqrt{\kappa}} W_{\kappa t}$, $t\geq 0$.
Moreover, for any $(\Phi_0,\phi)\in \mathbb{R}\times\mathcal{A}$ we have
\[
V^{\Phi_0,\phi}_T
=
\hat\Phi_0\left(Y_{\hat T} - Y_0\right)
+ \int_{0}^{\hat T} \hat\phi_{t} \left(Y_{\hat T} - Y_t\right)\, dt
- \frac{1}{2\hat\delta} \int_{0}^{\hat T} \hat\phi_t^2 \, dt,
\]
where
$\hat{\Phi}_0 := \frac{\Phi_0 \sqrt{\kappa}}{\sigma}$, 
$ \hat T := \frac{T}{\kappa}$, $\hat\delta := \frac{\kappa^2\delta}{\sigma^2}$
and
$\hat\phi_t := \frac{\kappa^{3/2}}{\sigma}\phi_{\kappa t}$, $t\in [0,\hat T]$.

Thus, in the case where the initial number of shares is $\Phi_0 = 0$, the value of the exponential utility maximization problem with parameters--maturity date $\hat T > 0$, initial stock price $\hat S_0$, long-term mean $\hat\mu$, mean-reversion $\kappa > 0$, volatility $\sigma > 0$, and market depth $\hat\delta > 0$--is equal to the value of (\ref{problem}) with the rescaled parameters: maturity date $\kappa \hat T$, initial stock price $\frac{\sqrt{\kappa}}{\sigma}\hat S_0$, long-term mean $\frac{\sqrt{\kappa}}{\sigma}\hat\mu$, and market depth $\frac{\sigma^2}{\kappa^2}\hat\delta$.

From (\ref{limit}) and the above analysis, we obtain that the corresponding certainty equivalent satisfies
$
\lim_{T \to \infty} \frac{c(T)}{T}
=  \frac{\sqrt{1 + \frac{\sigma^2 \hat\delta}{\kappa^2}} - 1}{2}.
$
Observe that this expression is decreasing in the mean-reversion parameter $\kappa > 0$ and increasing in the volatility $\sigma > 0$.

On the other hand, suppose that the maturity date is fixed and the market depth goes to infinity. Then, from the discussion following Theorem \ref{thm1.1}, we conclude that the corresponding value converges (as $\hat\delta \to \infty$) to
$
- e^{-\frac{\kappa^2}{\sigma^2} T |\hat\mu - \hat S_0|^2 - \frac{1}{2}\kappa^2 \hat T^2}.
$
This expression is clearly increasing in $\kappa$ and decreasing in $\sigma$.
Thus, there is no uniform pattern regarding monotonicity with respect to the mean-reversion and volatility parameters.
\end{remark}

\begin{remark}
The proof of Theorem \ref{thm1.1} relies on the duality result for exponential hedging with quadratic transaction costs given in Proposition A.2 of \cite{BDR:22}. From a financial perspective, one limitation of our setup is that we do not impose liquidation of the position at the maturity date, i.e., we do not include the constraint $\Phi_T = 0$.

A duality result for the case with a liquidation constraint is also available (see \cite{D:24}). In particular, Theorem 2.1 in \cite{D:24} extends Proposition A.2 of \cite{BDR:22} to the setting where liquidation at maturity is required. The liquidation constraint introduces additional difficulties in the dual representation. Specifically, the maximization is taken over all equivalent probability measures together with the corresponding martingales, whereas in Proposition A.2 of \cite{BDR:22} the dual variables consist only of equivalent probability measures (compare (2.3)--(2.4) in \cite{D:24} with (\ref{duality})--(\ref{portfolio}) in the current paper).

This additional structure makes the resulting computations substantially more involved. In \cite{D:24}, for example, we considered the problem of hedging a quadratic option in the Bachelier model. Even in this relatively simple setting, the computations arising from the dual formulation are quite complicated (see Proposition 4.1 in \cite{D:24}). In fact, we were only able to determine the optimal portfolio, but not the corresponding value of the optimization problem.

For these reasons, we do not impose a liquidation constraint in the present paper. Moreover, even in the current setting without a liquidation constraint, in order to express the value of the optimization problem \eqref{problem} in a tractable form, we assume that the initial number of shares satisfies $\Phi_0 = 0$. This assumption simplifies the computations in Lemma \ref{lem3}.
\end{remark}

\section{Proof of Theorem~\ref{thm1.1}}\label{sec:2}
Our proof will be based on a change of measure and the duality result given by Proposition A.2 in \cite{BDR:22}.

Set $B_t := S_t - S_0$, $t \ge 0$. 
From (\ref{model}),
$$
B_t=(S_0 - \mu)\left(e^{-t}-1\right)+\int_0^t e^{-(t-s)}\, dW_s, \ \ t\geq 0.$$
Namely, $B_t$, $t\in [0,T]$, is of the form
$
B_t=\gamma_t+\int_{0}^t m_{t,s} \, dW_s,
$
for $\gamma\in L^2[0,T]$ and a Volterra kernel $m\in L^2([0,T]^2)$.
Since
$dB_t = dW_t + (\mu - S_0 - B_t)\,dt$, $t\geq 0$
then from Theorem 2' in \cite{HH:68} it follows that there exists a probability measure 
$\hat{\mathbb P}\sim\mathbb P$ such that $B_t$, $t\in [0,T]$ is a Brownian motion with respect to $\hat{\mathbb P}$.
Clearly,
$
dW_t = dB_t - (\mu - S_0 - B_t)\,dt$, $t\geq 0$.
Thus, by applying Theorem 2' in \cite{HH:68} again, we obtain that the corresponding Radon--Nikodym derivative is given by
$
\frac{d\mathbb P}{d\hat{\mathbb P}} = e^X,
$
where
\begin{align}\label{1}
X
&:= \int_{0}^T (\mu - S_0 - B_t)\,dB_t
   - \frac{1}{2}\int_{0}^T (\mu - S_0 - B_t)^2\, dt \nonumber\\
&= \frac{T}{2}
  + (\mu - S_0)B_T
  - \frac{B_T^2}{2}
  - \frac{1}{2}\int_{0}^T (\mu - S_0 - B_t)^2\, dt .
\end{align}

Next, we apply Proposition A.2 in \cite{BDR:22}. We begin by verifying the integrability condition in Assumption A.1 of \cite{BDR:22}.
Since $B_t$, $t\in [0,T]$ is a Brownian motion with respect to $\hat{\mathbb P}$, there exists a constant $c>0$ such that
$
\mathbb E_{\hat{\mathbb P}}\!\left[\exp\!\left(c\max_{0\le t\le T} B_t^2\right)\right] < \infty .
$
This, together with the Cauchy--Schwarz inequality, (\ref{1}) and the elementary bound
$
S_t^2 \le 2(S_0^2 + B_t^2)$, $t\geq 0$
gives
\[
\mathbb{E}_{\mathbb{P}}\!\left[\exp\!\left(\frac{c}{4}\max_{0\le t\le T} S_t^2\right)\right]
\le 
\mathbb{E}_{\hat{\mathbb{P}}}\!\left[
\exp\!\left(\frac{cS_0^2+T}{2}+(\mu-S_0)B_T+\frac{c}{2}\max_{0\le t\le T} B_t^2\right)
\right]
<\infty .
\]
Hence Assumption A.1 in \cite{BDR:22} holds.

Clearly, 
for any probability measure
$\mathbb Q\sim\mathbb P$ we have
$\log\left(\frac{d\mathbb Q}{d{\mathbb P}}\right)=
\log\left(\frac{d\mathbb Q}{d\hat{\mathbb P}}\right)-X$.
Thus, from Proposition A.2 in \cite{BDR:22} we obtain the following duality 
\begin{align}\label{duality}
&\max_{\phi\in\mathcal A}\left\{-\log\left(\mathbb E_{\mathbb P}\left[\exp\left(-V^{\Phi_0,\phi}_T\right)\right]\right)\right\}\nonumber\\
&=\inf_{\mathbb Q\in \mathcal Q}\mathbb E_{\mathbb Q}\left[\Phi_0 B_T-X+\log\left(\frac{d\mathbb Q}{d\hat{\mathbb P}}\right)+\frac{\delta}{2}\int_{0}^T\left(
\mathbb E_{\mathbb Q}[B_T-B_t|\mathcal F_t]\right)^2 dt\right]
\end{align}
where $\mathcal Q$ is the set of all probability
measures $\mathbb Q\sim \hat{\mathbb P}$ with finite relative entropy
$\mathbb E_{\mathbb Q}\left[\log\frac{d\mathbb Q}{d\hat{\mathbb P}} \right]<\infty$. 
    Said proposition also yields that there exists a unique
solution $\hat{\mathbb Q}$ to the dual problem,  and it allows us to construct the unique 
solution $\hat{\phi}$ to the primal problem as
\begin{equation}\label{portfolio}
  \hat{\phi}_t = \delta  \mathbb E_{\hat{\mathbb Q}}[B_T-B_t|\mathcal F_t], \quad
  t \in [0,T].
\end{equation}
We arrive at the following lemma. 
\begin{lem}\label{lem1}
The dual infimum in (\ref{duality}) coincides with the one taken over
  all $\mathbb Q \in \mathcal Q$ whose densities take the form
  \begin{equation}\label{2}
\frac{d\mathbb Q}{d\hat{\mathbb P}} = \exp\left(\int_0^T \eta_t 
    dB_t-\frac{1}{2}\int_0^T\eta_t^2dt\right) 
  \end{equation}
  for some bounded and adapted $\eta$ changing values only at finitely
  many deterministic times. For such $\mathbb Q$ the induced value for
  the dual problem can be written as
  \begin{align}\label{equality}
&\mathbb E_{\mathbb Q}\left[\Phi_0 B_T-X+\log\left(\frac{d\mathbb Q}{d\hat{\mathbb P}}\right)+\frac{\delta}{2}\int_{0}^T\left(
\mathbb E_{\mathbb Q}[B_T-B_t|\mathcal F_t]\right)^2 dt\right]\nonumber\\
&=-\frac{T}{2}+\left(\Phi_0-(\mu-S_0)\right)\int_{0}^T a_t dt+\frac{1}{2}\left(\int_{0}^T a_t dt\right)^2+\frac{1}{2}\int_{0}^T a^2_t dt\\
&+\frac{1}{2}\int_{0}^T 
\left(\int_{0}^t a_s ds-(\mu-S_0)\right)^2 dt+\frac{\delta}{2}\int_{0}^T \left(\int_{t}^T a_s ds\right)^2 dt\nonumber\\
&+\int_{0}^T \mathbb E_{\mathbb Q}\left[\frac{1}{2}\left(1+\int_{s}^T l_{t,s} dt\right)^2+\frac{1}{2}\int_{s}^T\left(1+\int_{s}^t l_{u,s}du\right)^2 dt \right.\nonumber\\
&\left.+\frac{1}{2}\int_{s}^T l^2_{t,s}dt+\frac{\delta}{2}\int_{s}^T \left(\int_{t}^T l_{u,s} du\right)^2 dt\right]ds\nonumber
\end{align}
  where, for $t\in[0,T]$, $a_t$ and $l_{t,.}=(l_{t,s})_{s\in [0,t]}$ such that $\mathbb E_{\mathbb Q}\left[\int_{0}^t l^2_{t,s}ds\right]<\infty$  are
  determined by the It\^o-representations
  \begin{equation}\label{3}
     \eta_t = a_t + \int_0^t l_{t,s} dW^{\mathbb Q}_s
  \end{equation}
  with respect to the $\mathbb Q$-Brownian motion $W^{\mathbb
    Q}_s=B_s-\int_0^s \eta_r dr$, $s\in [0,T]$.
\end{lem}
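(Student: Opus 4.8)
The plan is to establish the lemma in two essentially independent stages: a reduction of the dual domain to the simple densities (\ref{2}), followed by the explicit evaluation (\ref{equality}) of the objective on that subclass. The two stages interact only through boundedness, since restricting to bounded $\eta$ is precisely what makes the computations of the second stage rigorous.

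For the reduction I would first note that for every $\mathbb Q\in\mathcal Q$ the density $\frac{d\mathbb Q}{d\hat{\mathbb P}}$ is a strictly positive $\hat{\mathbb P}$-martingale; since $\mathcal F_t$ is generated by the $\hat{\mathbb P}$-Brownian motion $B$, the martingale representation theorem writes it as a Dol\'eans-Dade exponential $\mathcal E(\int_0^\cdot\eta_r\,dB_r)$, so that every competitor already carries the exponential shape (\ref{2}). The genuine content is therefore that one may restrict to $\eta$ that are bounded and change value only at finitely many deterministic times without raising the infimum. One inequality is immediate because such densities belong to $\mathcal Q$. For the reverse I would take a near-optimal $\mathbb Q$, truncate its $\eta$ to $\eta^K:=(-K)\vee(\eta\wedge K)$, replace $\eta^K$ by its step approximation $\eta^{K,n}$ obtained by conditional averaging over a refining deterministic grid, and show that the dual objective at the induced measures converges to its value at $\mathbb Q$ as $n,K\to\infty$; the finiteness of the relative entropy $\tfrac12\mathbb E_{\mathbb Q}[\int_0^T\eta_r^2\,dr]$ supplies the domination needed to pass to the limit.

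For the formula, everything flows from one representation of $B$ under $\mathbb Q$. Girsanov gives $B_t=W^{\mathbb Q}_t+\int_0^t\eta_r\,dr$, and substituting (\ref{3}) and applying the stochastic Fubini theorem yields
\[ B_t=\int_0^t a_r\,dr+\int_0^t\Bigl(1+\int_s^t l_{r,s}\,dr\Bigr)\,dW^{\mathbb Q}_s,\qquad t\in[0,T]. \]
I would then expand the objective term by term. Using (\ref{1}) to write out $-X$, the quantities $\Phi_0\mathbb E_{\mathbb Q}[B_T]$, $(\mu-S_0)\mathbb E_{\mathbb Q}[B_T]$, $\tfrac12\mathbb E_{\mathbb Q}[B_T^2]$ and $\tfrac12\int_0^T\mathbb E_{\mathbb Q}[(\mu-S_0-B_t)^2]\,dt$ each split, via It\^o's isometry, into a deterministic part carried by $\int a$ and a variance part carried by the kernel $1+\int l$. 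The entropy term collapses to $\mathbb E_{\mathbb Q}[\log\frac{d\mathbb Q}{d\hat{\mathbb P}}]=\tfrac12\mathbb E_{\mathbb Q}[\int_0^T\eta_t^2\,dt]$, and the impact term is handled through $\mathbb E_{\mathbb Q}[B_T-B_t\mid\mathcal F_t]=\int_t^T a_r\,dr+\int_0^t(\int_t^T l_{r,s}\,dr)\,dW^{\mathbb Q}_s$, whose expected square again separates cleanly. Collecting the deterministic contributions reproduces the first block of (\ref{equality}), and one last Fubini interchange moving the $s$-integration to the outside assembles the remaining kernel terms into the expectation block.

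I expect the main obstacle to be the reduction rather than the algebra. Establishing that the infimum is unaffected by the passage to bounded step $\eta$ means controlling three structurally different functionals — relative entropy, the quadratic terminal terms, and the conditional-expectation impact term — simultaneously along a sequence of measures that are themselves varying, which is where the finite-entropy domination and uniform integrability of the densities must be used with care. By contrast, once $\eta$ is bounded, every stochastic integral in play is a true $L^2$-martingale, so $\mathbb E_{\mathbb Q}[\int_0^\cdot\eta\,dW^{\mathbb Q}]=0$, and It\^o's isometry and the stochastic Fubini theorem apply without further integrability checks; the identity (\ref{equality}) is then pure bookkeeping.
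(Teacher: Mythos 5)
Your proposal is correct and follows essentially the same route as the paper: martingale representation to obtain the exponential form of the density, a density/approximation argument to reduce to bounded step $\eta$ (which the paper dispatches with the phrase ``standard density arguments'' where you supply the truncation-plus-grid details), and then the same term-by-term expansion via the representation $B_t=\int_0^t a_r\,dr+\int_0^t\bigl(1+\int_s^t l_{r,s}\,dr\bigr)\,dW^{\mathbb Q}_s$, It\^o's isometry and Fubini. No substantive difference to report.
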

\begin{proof}
 For any $\mathbb Q \in \mathcal Q$ the martingale representation property of Brownian motion gives us a predictable $\eta$ with $\mathbb E_{\mathbb Q}[\log(d\mathbb Q/d\hat{\mathbb P})]=\mathbb E_{\mathbb Q}[\int_0^T\eta^2_sds]/2<\infty$ such that the density $d\mathbb Q/d\hat{\mathbb P}$ takes the form (\ref{2}). Using this density to rewrite the dual target functional as an expectation under $\hat{\mathbb P}$, we can follow standard density arguments in $L^2(dt\otimes\hat{\mathbb P})$ (see Section 3.2 in \cite{KS}) to see that the infimum over $\mathbb Q \in \mathcal Q$ can be realized by considering the $\mathbb Q$ induced via (\ref{2}) by simple $\eta$ as described in the lemma's formulation. As a consequence, the It\^o representations of $\eta_t$ in (\ref{3}) can be chosen in such a way that the resulting $(a_t,l_{t,.})$ are also measurable in $t$: in fact they only change when $\eta$ changes its value, i.e., at finitely many deterministic times. This joint measurability will allow us below to freely apply Fubini’s theorem. Let us rewrite the dual target functional in terms of $a$ and $l$. In terms of $\eta$ and the $\mathbb Q$-Brownian motion $W^{\mathbb Q}$.
  
First, from It\^o's isometry and Fubini's theorem we obtain
\begin{equation}\label{4}
\mathbb E_{\mathbb Q}\left[\int_{0}^T \eta^2_u du\right]=\int_{0}^T a^2_t dt+\int_{0}^T\int_{s}^{T}\mathbb E_{\mathbb Q}\left[l^2_{t,s} \right]dt\, ds.
\end{equation}
Again by Fubini's theorem it follows that
\begin{align*}
&\mathbb E_{\mathbb Q}\left[\int_{t}^T \eta_udu \middle|\mathcal F_t\right]=\int_{t}^T a_u du+\mathbb E_{\mathbb Q}\left[\int_{0}^T\int_{t}^T l_{u,s}du\, dW^{\mathbb Q}_s\middle|\mathcal F_t\right]\\
&=\int_{t}^T a_u du+\int_{0}^{t} \int_{t}^T l_{u,s} du\, dW^{\mathbb Q}_s, \ \ t\in [0,T]
\end{align*}
where the last equality follows from the
martingale property of stochastic integrals. 
Thus, another application of It\^o's isometry and Fubini's theorem yields
\begin{align}\label{5}
&\int_{0}^T\left(\mathbb E_{\mathbb Q}[B_T-B_t|\mathcal F_t]\right)^2 dt=\int_{0}^T\left(\mathbb E_{\mathbb Q}\left[\int_{t}^T \eta_udu \middle|\mathcal F_t\right]\right)^2 dt\nonumber\\
&=\int_{0}^T\left(\int_{t}^T a_u du\right)^2 dt+\int_{0}^T\mathbb E_{\mathbb Q}\left[
\int_{s}^T \left(\int_{t}^T l_{u,s} du\right)^2 dt\right]ds.
\end{align}
By applying again the It\^o's isometry and the Fubini theorem 
\begin{equation}\label{6}
\mathbb E_{\mathbb Q}\left[B^2_T\right]=\left(\int_{0}^T a_t dt\right)^2+\int_{0}^T\mathbb E_{\mathbb Q}\left[\left(1+\int_{s}^T l_{u,s} du\right)^2\right] ds
\end{equation}
and 
\begin{align}\label{7}
&\mathbb E_{\mathbb Q}\left[\int_{0}^T \left(B_t-(\mu-S_0)\right)^2 dt\right]\nonumber\\
&=\int_{0}^T 
\left(\int_{0}^t a_s ds-(\mu-S_0)\right)^2 dt+\int_{0}^T \mathbb E_{\mathbb Q}\left[\int_{s}^T\left(1+\int_{s}^t l_{u,s} du\right)^2dt\right]ds.
\end{align}
Plugging (\ref{1}) and (\ref{4})--(\ref{7}) into the left hand side of (\ref{equality}) provides us with the claimed formula.
\end{proof}

We now have all the pieces in place that we need for the 
\textbf{Completion of the proof of Theorem  \ref{thm1.1}}.
\begin{proof}
The crucial point of the above representation is that by taking the
minimum separately over $a$ and over $l_{.,s}$ for each $s \in [0,T]$
we obtain deterministic variational problems that can be solved
explicitly and this
deterministic minimum yields the solution to the dual problem. 
Namely, 
from Lemma \ref{lem3} it follows that there exists deterministic functions $\hat a,\hat l$ which minimize the right hand side of (\ref{equality}).
Theorem 2 in \cite{HH:68} yields a probability measure $\hat{\mathbb Q}$ which satisfies
$$
\frac{d\hat{\mathbb Q}}{d\hat{\mathbb P}} = \exp\left(\int_0^T \hat\eta_t 
    dB_t-\frac{1}{2}\int_0^T\hat\eta_t^2dt\right) 
  $$
 for $\hat\eta_t = \hat a_t + \int_0^t \hat l_{t,s} dW^{\hat{\mathbb Q}}_s$, $t\in [0,T]$. The probability measure $\hat{\mathbb Q}$ is the unique solution 
 to the dual problem given by the right-hand side of (\ref{duality}). 
 
From (\ref{portfolio}) and (\ref{property}) ($\hat h=\hat a$, $s=0$) we obtain 
 $$
 \hat\phi_0=\delta\int_{0}^T \hat a_t dt=\frac{\left(1+\frac{1-\delta}{(1+\delta)^{3/2}}
\tanh\left(\frac{\sqrt{1+\delta} T}{2}\right)+\frac {\delta T }{1+\delta}
\right)(\mu-S_t)-\Phi_0}{1+\frac {\delta T }{1+\delta}+\frac{\sqrt{1+\delta}}{\sinh(\sqrt{1+\delta}T)}+\frac{1+\delta^2}{(1+\delta)^{3/2}}\tanh\left(\frac{\sqrt{1+\delta} T}{2}\right)}.
$$
This gives (\ref{port}) for $t=0$. 
We claim that (\ref{port}) holds for every \(t \in [0,T]\). To this end, we apply the dynamic programming argument used in Lemma 3.5 of \cite{BDR:22}.
Assume, by contradiction, that the statement fails. By (\ref{portfolio}), the sample paths of the unique optimal portfolio \(\hat{\phi}\) are continuous. Hence, there exists \(t_0 \in [0,T]\) such that, with positive probability, \(\hat{\phi}_{t_0}\) does not coincide with the right-hand side of (\ref{port}). Now consider the strategy \(\tilde{\phi}\), which coincides with \(\hat{\phi}\) up to time \(t_0\) and is given by (\ref{port}) for \(t\geq t_0\).
For any strategy $\phi$, we can write the contribution over the interval $[t_0,T]$ to the resulting
terminal wealth as
\[
V_T^{\Phi_0,\phi} - V_{t_0}^{\Phi_0,\phi}
= {\Phi}_{t_0}(S_T-S_{t_0})
+ \int_{t_0}^T \phi_t(S_T-S_t)\,dt
- \frac{1}{2\delta}\int_{t_0}^T \phi_t^2\,dt
=: V^{\Phi_{t_0},\phi}_{[t_0,T]},
\]
where, \(V_{t_0}^{\Phi_0,\phi}\) denotes the portfolio value at time \(t_0\), defined by (\ref{por}) with \(T\) replaced by \(t_0\).
Thus, by the Markov property of \(S_t\), \(t\in[0,T]\), and by the choice of \(\tilde{\phi}\) as the unique optimal policy from time \(t_0\) onward, we obtain
\[
\mathbb{E}_{\mathbb{P}}\!\left[
-\exp\!\left(- V^{\tilde\Phi_{t_0},\tilde\phi}_{[t_0,T]}\right)
\,\middle|\,\mathcal F_{t_0}
\right]
\ge
\mathbb{E}_{\mathbb{P}}\!\left[
-\exp\!\left(-V^{\hat\Phi_{t_0},\hat\phi}_{[t_0,T]}\right)
\,\middle|\,\mathcal F_{t_0}
\right],
\]
with strict inequality on \(\{\hat\phi_{t_0}\neq \tilde\phi_{t_0}\}\) (i.e., where (\ref{port}) is violated), since the continuity of \(\hat\phi\) and \(\tilde\phi\) implies that once they differ at \(t_0\), they must differ on an open interval.

Since, by assumption, this occurs with positive probability, it follows for the unconditional expected utility of \(\tilde\phi\) that
\begin{align*}
\mathbb{E}_{\mathbb{P}}\bigl[-\exp(-V_T^{\Phi_0,\tilde\phi})\bigr]
&=
\mathbb{E}_{\mathbb{P}}\!\left[
\exp(-V_{t_0}^{\Phi_0,\tilde\phi})
\mathbb{E}_{\mathbb{P}}\!\left[
-\exp\!\left(-V^{\tilde\Phi_{t_0},\tilde\phi}_{[t_0,T]}\right)
\,\middle|\,\mathcal F_{t_0}
\right]
\right]
\\
&>
\mathbb{E}_{\mathbb{P}}\!\left[
\exp(- V_{t_0}^{\Phi_0,\tilde\phi})
\mathbb{E}_{\mathbb{P}}\!\left[
-\exp\!\left(-V^{\hat\Phi_{t_0},\hat\phi}_{[t_0,T]}\right)
\,\middle|\,\mathcal F_{t_0}
\right]
\right]
\\
&=
\mathbb{E}_{\mathbb{P}}\bigl[-\exp(- V_T^{\Phi_0,\hat\phi})\bigr],
\end{align*}
contradicting the optimality of $\hat\phi$.

Next, we establish (\ref{value}). By applying (\ref{number}) for $\theta:=-1$ we get 
\begin{align*}
&\frac{1}{2}\left(1+\int_{s}^T \hat l_{t,s} dt\right)^2+\frac{1}{2}\int_{s}^T\left(1+\int_{s}^t \hat l_{u,s}du\right)^2 dt\\
&+\frac{1}{2}\int_{s}^T \hat l^2_{t,s}dt+\frac{\delta}{2}\int_{s}^T \left(\int_{t}^T \hat l_{u,s} du\right)^2 dt=\frac{1}{2}\left(\mathcal V(T-s)+1\right), \ \ s\in [0,T].
\end{align*}
By applying (\ref{number}) for $\theta:=\mu-S_0$ we get 
\begin{align*}
&\left(\Phi_0-(\mu-S_0)\right)\int_{0}^T \hat a_t dt+\frac{1}{2}\left(\int_{0}^T \hat a_t dt\right)^2+\frac{1}{2}\int_{0}^T \hat a^2_t dt\\
&+\frac{1}{2}\int_{0}^T 
\left(\int_{0}^t \hat a_s ds-(\mu-S_0)\right)^2 dt+\frac{\delta}{2}\int_{0}^T \left(\int_{t}^T\hat a_s ds\right)^2 dt=\mathcal V(T)(\mu-S_0)^2.
\end{align*}
By combining the above two equalities with (\ref{duality}) and (\ref{equality})
we obtain (\ref{value}) and complete the proof.  
\end{proof}

\section{Auxiliary Results}\label{sec:3}
We start with the following simple result. Since we could not find a direct reference we provide the full details. 
\begin{lem}\label{lem2}
Let $\alpha>0$. For any $s\in [0,T)$ and $x,y\in\mathbb R$ let $\Gamma^s_{x,y}$ be the space 
of all absolutely continuous functions $g:[s,T]\rightarrow \mathbb R$
which satisfy $g(s)=x$ and $g(T)=y$.
Then,
\begin{align*}
&\min_{g\in\Gamma^s_{x,y}}\left[\frac{1}{2}\int_{s}^T\dot{g}_t^2 dt+\frac{1}{2}\alpha^2\int_{s}^T g^2_t dt\right]\\
&=\frac{1}{2}\alpha\left(\frac{(x-y)^2}{\sinh(\alpha (T-s))}+\tanh(\alpha (T-s)/2)(x^2+y^2)\right).\nonumber
\end{align*}
\end{lem}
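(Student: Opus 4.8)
The plan is to treat the statement as a strictly convex, fixed-endpoint calculus-of-variations problem, exhibit its Euler--Lagrange minimizer explicitly, and then evaluate the functional by collapsing it to a boundary term via one integration by parts. First I would write the Lagrangian $L(g,\dot g)=\tfrac12\dot g^2+\tfrac12\alpha^2 g^2$ and its Euler--Lagrange equation $\ddot g=\alpha^2 g$, and solve the associated two-point boundary value problem $g(s)=x$, $g(T)=y$. The unique solution is
\[
\hat g(t)=\frac{x\,\sinh(\alpha(T-t))+y\,\sinh(\alpha(t-s))}{\sinh(\alpha(T-s))},\qquad t\in[s,T],
\]
which one verifies directly: each hyperbolic sine solves $\ddot g=\alpha^2 g$, and the boundary values match since $\sinh(0)=0$.

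Second, I would show that $\hat g$ is in fact the unique minimizer over $\Gamma^s_{x,y}$ by a completion-of-squares argument, which avoids invoking any abstract existence theory. Writing an arbitrary competitor (with finite functional value) as $g=\hat g+h$ where $h(s)=h(T)=0$, the target functional splits as
\[
J[g]=J[\hat g]+\int_s^T\big(\dot{\hat g}\,\dot h+\alpha^2\hat g\,h\big)\,dt+\tfrac12\int_s^T\big(\dot h^2+\alpha^2 h^2\big)\,dt.
\]
Integrating the cross term by parts and using $\ddot{\hat g}=\alpha^2\hat g$ together with $h(s)=h(T)=0$ makes that term vanish, so $J[g]-J[\hat g]=\tfrac12\int_s^T(\dot h^2+\alpha^2 h^2)\,dt\ge 0$, with equality only when $h\equiv 0$. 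Here the endpoint contributions and the integration by parts are legitimate because finiteness of $\int_s^T\dot g^2\,dt$ forces the admissible $g$ to be absolutely continuous, so the fundamental theorem of calculus applies; this is the one point I would be careful to state precisely.

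Third, to compute $J[\hat g]$ I would not evaluate the two integrals separately but instead integrate $\int_s^T\dot{\hat g}^2\,dt$ by parts once, using $\ddot{\hat g}=\alpha^2\hat g$, which collapses the whole functional to a boundary term:
\[
J[\hat g]=\tfrac12\big[\hat g\,\dot{\hat g}\big]_s^T=\tfrac12\big(y\,\dot{\hat g}(T)-x\,\dot{\hat g}(s)\big).
\]
Differentiating the explicit $\hat g$ gives $\dot{\hat g}(s)=\alpha\big(y-x\cosh(\alpha(T-s))\big)/\sinh(\alpha(T-s))$ and $\dot{\hat g}(T)=\alpha\big(y\cosh(\alpha(T-s))-x\big)/\sinh(\alpha(T-s))$, so that
\[
J[\hat g]=\frac{\alpha}{2\sinh(\alpha(T-s))}\Big((x^2+y^2)\cosh(\alpha(T-s))-2xy\Big).
\]

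The only remaining task, and the main (but purely routine) obstacle, is to reconcile this with the stated right-hand side. This is a hyperbolic-identity manipulation using $\tanh(\theta/2)=(\cosh\theta-1)/\sinh\theta$: writing $(x^2+y^2)\cosh(\alpha(T-s))-2xy=(x-y)^2+(x^2+y^2)\big(\cosh(\alpha(T-s))-1\big)$ and dividing by $\sinh(\alpha(T-s))$ immediately produces the claimed closed form $\tfrac12\alpha\big((x-y)^2/\sinh(\alpha(T-s))+\tanh(\alpha(T-s)/2)(x^2+y^2)\big)$. I expect no genuine difficulty beyond this bookkeeping; the delicate conceptual point is simply the justification, noted above, that the admissible class may be reduced to absolutely continuous functions so every integration-by-parts step is valid.
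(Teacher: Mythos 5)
Your proof is correct and follows the same core route as the paper: both reduce the problem to the Euler--Lagrange equation $\ddot g=\alpha^2 g$ with the given boundary data and obtain the identical minimizer $\hat g(t)=\bigl(x\sinh(\alpha(T-t))+y\sinh(\alpha(t-s))\bigr)/\sinh(\alpha(T-s))$. You deviate in two worthwhile ways. First, where the paper merely asserts that convexity yields a unique solution which must satisfy the Euler--Lagrange equation (citing Gelfand--Fomin), you verify optimality directly via the expansion $J[\hat g+h]=J[\hat g]+\tfrac12\int_s^T(\dot h^2+\alpha^2h^2)\,dt$ after killing the cross term by integration by parts; this is self-contained and actually proves that the candidate attains the minimum rather than presupposing that a minimizer exists. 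Second, you evaluate $J[\hat g]$ by collapsing it to the boundary term $\tfrac12[\hat g\,\dot{\hat g}]_s^T$, which is cleaner than the paper's term-by-term integration of $\cosh$ expressions; both routes land on $\frac{\alpha}{2\sinh(\alpha(T-s))}\bigl((x^2+y^2)\cosh(\alpha(T-s))-2xy\bigr)$ and the same identity $\tanh(\theta/2)=(\cosh\theta-1)/\sinh\theta$ finishes.

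One caveat on your ``delicate conceptual point'': the claim that finiteness of $\int_s^T\dot g^2\,dt$ forces $g$ to be absolutely continuous is false for the class $\Gamma^s_{x,y}$ as literally defined (continuous and a.e.\ differentiable) --- a Cantor-type singular function has a.e.\ derivative zero, hence zero Dirichlet energy, while still moving between prescribed endpoints, and such competitors would drive the infimum below the stated value. The lemma is only correct if the admissible class is read as absolutely continuous (equivalently $H^1$) functions; this imprecision is inherited from the paper's own formulation, but your proposed justification does not repair it, so you should instead simply restrict the class at the outset.
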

\begin{proof}
The left-hand side of the above equation can be written as the optimization problem
$\min_{g \in \Gamma^s_{x,y}} \int_{s}^T H(\dot g_t, g_t)\,dt$,
where $H(u,v) := \tfrac{1}{2}u^2 + \tfrac{1}{2}\alpha^2 v^2$ for $u,v\in\mathbb R$.

Since $H$ is a positive definite quadratic form, the above optimization problem admits a unique minimizer. This minimizer necessarily satisfies the Euler--Lagrange equation (see, for example, Section 8.2 in \cite{E}, applied to the one-dimensional case).
Thus, the optimizer $\hat g$ is the unique solution to the ODE
$$\ddot{\hat g}(t)=\alpha^2 \hat g(t), \ \ \hat g(s)=x, \ \ \hat g(T)=y.$$
We conclude that 
$$\hat g(t)=\frac{x\sinh(\alpha(T-t))+y\sinh(\alpha (t-s))}{\sinh(\alpha (T-s))}, \ \ t\in [s,T].$$
Hence, simple computations yield 
\begin{align*}
&\min_{g\in\Gamma^s_{x,y}}\left[\frac{1}{2}\int_{s}^T\dot{g}_t^2 dt+\frac{1}{2}\alpha^2\int_{s}^T g^2_t dt\right]=\frac{1}{2}\int_{s}^T\dot{\hat g}_t^2 dt+\frac{1}{2}\alpha^2\int_{s}^T \hat g^2_t dt\\
&=\frac{\alpha^2}{2\sinh^2(\alpha (T-s))}\int_{s}^T \left(x^2\cosh(2\alpha (T-t))+y^2 \cosh (2\alpha(t-s))\right)dt\\
&-\frac{\alpha^2 x y}{2\sinh^2(\alpha (T-s))}  \int_{s}^T \cosh(\alpha (2t-s-T))) dt\\
&=\frac{1}{2}\alpha \left(\coth(\alpha (T-s))(x^2+y^2)-\frac{2 x y}{\sinh(\alpha (T-s))}\right)\\
&=\frac{\alpha}{2}\left(\frac{(x-y)^2}{\sinh(\alpha (T-s))}+\tanh(\alpha (T-s)/2))(x^2+y^2)\right).
\end{align*}
\end{proof}

Next, we apply Lemma \ref{lem2} in order to derive the following computations. 
\begin{lem}\label{lem3}
Let $\theta\in \mathbb R$ be a constant 
and $s \in [0,T)$. Consider the optimization problem 
\begin{align}\label{op}
&\min_{h\in L^2[s,T]}\left\{(\Phi_0-\theta)\int_{s}^T h_t dt+\frac{1}{2}\left(\int_{s}^T h_t dt\right)^2+\frac{1}{2}\int_{s}^T h^2_t dt\right.\nonumber\\
&\left.+\frac{1}{2} \int_{s}^T 
\left(\int_{s}^t h_v dv-\theta\right)^2 dt+\frac{\delta}{2}\int_{s}^T \left(\int_{t}^T h_v dv\right)^2 dt\right\}.
\end{align}
The above optimization problem has a unique optimizer $\hat h$ which satisfies 
\begin{equation}\label{property}
\int_{s}^T \hat h_t dt=\frac{\left(1+\frac{1-\delta}{(1+\delta)^{3/2}}
\tanh\left(\frac{\sqrt{1+\delta}(T-s)}{2}\right)+\frac {\delta (T-s) }{1+\delta}
\right)\theta-\Phi_0}{1+\frac {\delta (T-s) }{1+\delta}+\frac{\sqrt{1+\delta}}{\sinh\left(\sqrt{1+\delta} (T-s)\right)}+\frac{1+\delta^2}{(1+\delta)^{3/2}}\tanh\left(\frac{\sqrt{1+\delta} (T-s)}{2}\right)}.
\end{equation}
Moreover, for the case $\Phi_0=0$ we have 
\begin{align}\label{number}
&\min_{h\in L^2[s,T]}\left\{-\theta\int_{s}^T h_t dt+\frac{1}{2}\left(\int_{s}^T h_t dt\right)^2+\frac{1}{2}\int_{s}^T h^2_t dt\right.\nonumber\\
&\left.+\frac{1}{2} \int_{s}^T 
\left(\int_{s}^t h_v dv-\theta\right)^2 dt+\frac{\delta}{2}\int_{s}^T \left(\int_{t}^T h_v dv\right)^2 dt\right\}=\frac{1}{2}\mathcal V(T-s)\theta^2.
\end{align}
\end{lem}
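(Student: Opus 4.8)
The plan is to pass from the control $h$ to the ``position'' $g_t := \int_s^t h_v\,dv$, so that $\dot g_t = h_t$, $g_s = 0$, and the free quantity $y := g_T = \int_s^T h_t\,dt$ becomes the natural outer variable. Under this substitution $\int_s^T h_t^2\,dt=\int_s^T\dot g_t^2\,dt$, $\int_s^t h_v\,dv=g_t$, and $\int_t^T h_v\,dv=y-g_t$, so the functional in \eqref{op} reads
\begin{align*}
&(\Phi_0-\theta)y+\tfrac12 y^2+\tfrac12\int_s^T\dot g_t^2\,dt\\
&+\tfrac12\int_s^T(g_t-\theta)^2\,dt+\tfrac{\delta}{2}\int_s^T(y-g_t)^2\,dt.
\end{align*}
Since the problem is strictly convex and coercive in $h$ (the term $\tfrac12\int_s^T h_t^2\,dt$ dominates), a unique minimizer exists, and I would compute it by minimizing first over the path $g$ with both endpoints $g_s=0$, $g_T=y$ fixed, and then over the scalar $y$.

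For the inner problem I would complete the square in the running integrand: writing $m := \frac{\theta+\delta y}{1+\delta}$ one checks that
\begin{equation*}
\tfrac12(g_t-\theta)^2+\tfrac{\delta}{2}(y-g_t)^2=\tfrac{1+\delta}{2}(g_t-m)^2+\tfrac{\delta}{2(1+\delta)}(y-\theta)^2,
\end{equation*}
so after the shift $\tilde g_t := g_t-m$ the path-dependent part becomes exactly the Lagrangian of Lemma \ref{lem2} with $\alpha=\sqrt{1+\delta}$ and boundary data $\tilde g_s=-m$, $\tilde g_T=\frac{y-\theta}{1+\delta}$. Applying Lemma \ref{lem2} then evaluates the inner minimum in closed form, and since $m$ and $\tilde g_T$ are affine in $y$, the result is an explicit quadratic $F(y)$ in the single variable $y$ whose coefficients involve $\sinh(\sqrt{1+\delta}(T-s))$ and $\tanh(\tfrac{\sqrt{1+\delta}(T-s)}{2})$; the residual $\tfrac{\delta}{2(1+\delta)}(y-\theta)^2$ from the completion of the square contributes a further $\tfrac{\delta(T-s)}{2(1+\delta)}(y-\theta)^2$ after integration in $t$.

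It remains to minimize $F(y)$ over $y\in\mathbb R$. The first-order condition $F'(y)=0$ is a single linear equation whose solution I expect to collapse, after simplification of the hyperbolic terms, to the stated value of $\int_s^T\hat h_t\,dt$ in \eqref{property}; substituting $\hat y$ back into $F$ in the case $\Phi_0=0$ (where $F$ becomes a homogeneous quadratic form in $(\theta,y)$, so that the minimum is automatically proportional to $\theta^2$) should yield $\tfrac12 V(T-s)\theta^2$ and hence \eqref{number}. The genuinely conceptual steps---the substitution, the completion of the square reducing everything to Lemma \ref{lem2}, and the strict convexity guaranteeing uniqueness---are short; the main obstacle is purely computational, namely the careful hyperbolic-function bookkeeping needed to verify that $F'(\hat y)=0$ and $F(\hat y)$ reproduce the precise closed forms in \eqref{property} and \eqref{number}, and in particular to recognize the combination of $\sinh$, $\cosh$ and $\tanh$ terms that assembles into the function $V$.
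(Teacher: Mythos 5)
Your proposal is correct and follows essentially the same route as the paper: the paper also fixes the total integral (its variable $z=y-\theta$ and its path $f_t=\int_s^t h_u\,du-\theta$ differ from your $y$ and $g_t$ only by the constant shift $\theta$), completes the square to reduce the inner problem to Lemma~\ref{lem2} with $\alpha=\sqrt{1+\delta}$ and the same shifted boundary data $-m$ and $\tfrac{y-\theta}{1+\delta}$, picks up the same residual $\tfrac{\delta(T-s)}{2(1+\delta)}(y-\theta)^2$, and then minimizes the resulting scalar quadratic. The hyperbolic bookkeeping you defer is exactly the computation the paper carries out to identify $\hat z+\theta$ with \eqref{property} and the minimal value with $\tfrac12 V(T-s)\theta^2$.
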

   \begin{proof}
   The uniqueness follows from the strict convexity of the functional. 
 For any $z\in\mathbb R$ let $\Delta^s_z\subset L^2[s,T]$ be the set of all functions $h\in L^2[s,T]$
    which satisfy $\int_{s}^T h_{t} dt-\theta =z$. 
Let $z>0$ and let $h\in \Delta^s_z$. 
    The function $f(t):=\int_{s}^t h_u du-\theta$, $t\in [s,T]$ satisfies
$\dot{f}=h$ a.e., and so,
    \begin{align*}\label{equality}
&(\Phi_0-\theta)\int_{s}^T h_t dt+\frac{1}{2}\left(\int_{s}^T h_t dt\right)^2+\frac{1}{2}\int_{s}^T h^2_t dt\nonumber\\
&+\frac{1}{2} \int_{s}^T 
\left(\int_{s}^t h_v dv-\theta\right)^2 dt+\frac{\delta}{2}\int_{s}^T \left(\int_{t}^T h_v dv\right)^2 dt\\
&=(\Phi_0-\theta)(z+\theta)+\frac{1}{2}(z+\theta)^2+\frac{1}{2}\int_{s}^T \dot{f}^2(t)dt\\
&+\frac{1}{2}\int_{s}^T f^2(t) dt+\frac{\delta}{2} \int_{s}^T (z-f(t))^2 dt\\
&=(\Phi_0-\theta)(z+\theta)+\frac{1}{2}(z+\theta)^2+\frac{1}{2}\frac {\delta (T-s)}{1+\delta}z^2\\
&+\frac{1}{2}\int_{s}^T \dot{f}^2(t)dt+\frac{1}{2}(1+\delta)\int_{s}^T \left(f(t)-\frac{\delta z}{1+\delta}\right)^2 dt.
\end{align*}
Next, by applying Lemma \ref{lem2} for $\alpha:=\sqrt{1+\delta}$, 
$g:=f-\frac{\delta z}{1+\delta}$, 
and observing 
the relations $g(s)=-\theta-\frac{\delta z}{1+\delta }$, $g(T)=\frac{z}{1+\delta}$,
 we get 
 \begin{align*}
&\min_{h\in \Delta^s_z}\left\{(\Phi_0-\theta)\int_{s}^T h_t dt+\frac{1}{2}\left(\int_{s}^T h_t dt\right)^2+\frac{1}{2}\int_{s}^T h^2_t dt\right.\nonumber\\
&\left.+\frac{1}{2} \int_{s}^T 
\left(\int_{s}^t h_v dv-\theta\right)^2 dt+\frac{\delta}{2}\int_{s}^T \left(\int_{t}^T h_v dv\right)^2 dt\right\}\\
&=(\Phi_0-\theta)(z+\theta)+\frac{1}{2}\ (z+\theta)^2+\frac{1}{2}\frac {\delta (T-s) }{1+\delta}z^2+\frac{1}{2}\frac{\sqrt{1+\delta}}{\sinh\left(\sqrt{1+\delta} (T-s)\right)}(z+\theta)^2\\
&+\frac{1}{2}\sqrt{1+\delta}\tanh\left(\frac{\sqrt{1+\delta} (T-s)}{2}\right)\left(\left(\theta+\frac{\delta z}{1+\delta}\right)^2+\left(\frac{ z}{1+\delta}\right)^2\right)\\
&=\Phi_0\theta+\frac{1}{2}\left(\sqrt{1+\delta}\coth\left((\sqrt{1+\delta} (T-s)\right)-1\right)\theta^2\\
&+\left(\Phi_0+\frac{\delta\theta}{\sqrt{1+\delta}}\coth\left(\sqrt{1+\delta} (T-s)\right)+\frac{\theta}{\sqrt{1+\delta}\sinh\left(\sqrt{1+\delta} (T-s)\right)}\right)z\\
&+\frac{1}{2}\left(1+\frac {\delta (T-s) }{1+\delta}+\frac{\sqrt{1+\delta}}{\sinh\left(\sqrt{1+\delta} (T-s)\right)}+\frac{1+\delta^2}{(1+\delta)^{3/2}}\tanh\left(\frac{\sqrt{1+\delta} (T-s)}{2}\right)\right)z^2.
\end{align*}

Since $L^2[s,T]=\bigcup_{z\in\mathbb R}\Delta^s_z$, then we are looking for a 
$z$ which minimizes the above quadratic form. 
The minimum of the above quadratic is attained at the unique point 
$$\hat z:=-\frac{\Phi_0+\frac{\delta\theta}{\sqrt{1+\delta}}\coth\left(\sqrt{1+\delta} (T-s)\right)+\frac{\theta}{\sqrt{1+\delta}\sinh\left(\sqrt{1+\delta} (T-s)\right)}}{1+\frac {\delta (T-s) }{1+\delta}+\frac{\sqrt{1+\delta}}{\sinh\left(\sqrt{1+\delta} (T-s)\right)}+\frac{1+\delta^2}{(1+\delta)^{3/2}}\tanh\left(\frac{\sqrt{1+\delta} (T-s)}{2}\right)}.$$
Thus, the unique solution to the optimization problem (\ref{op}) is an element in $\Delta^s_{\hat z}$. 
Simple calculations yield
$$
\hat z+\theta=\frac{\left(1+\frac{1-\delta}{(1+\delta)^{3/2}}
\tanh\left(\frac{\sqrt{1+\delta}(T-s)}{2}\right)+\frac {\delta (T-s) }{1+\delta}
\right)\theta-\Phi_0}{1+\frac {\delta (T-s) }{1+\delta}+\frac{\sqrt{1+\delta}}{\sinh\left(\sqrt{1+\delta} (T-s)\right)}+\frac{1+\delta^2}{(1+\delta)^{3/2}}\tanh\left(\frac{\sqrt{1+\delta} (T-s)}{2}\right)}
$$
and
(\ref{property}) follows. 

Finally,
for the case where $\Phi_0=0$, the minimal value of the above quadratic form (in $z$) is
\begin{align*}
&\frac{1}{2}\left(\sqrt{1+\delta}\coth\left(\sqrt{1+\delta} (T-s)\right)-1\right)\theta^2\\
&-\frac{1}{2}\frac{\left(\frac{\delta}{\sqrt{1+\delta}}\coth\left(\sqrt{1+\delta} (T-s)\right)+\frac{1}{\sqrt{1+\delta}\sinh\left(\sqrt{1+\delta} (T-s)\right)}\right)^2\theta^2}{1+\frac {\delta (T-s) }{1+\delta}+\frac{\sqrt{1+\delta}}{\sinh\left(\sqrt{1+\delta} (T-s)\right)}+\frac{1+\delta^2}{(1+\delta)^{3/2}}\tanh\left(\frac{\sqrt{1+\delta} (T-s)}{2}\right)}\\
&=\frac{1}{2}\left(\frac{\left(1+\frac {\delta (T-s) }{1+\delta}\right)\sqrt{1+\delta}\coth\left(\sqrt{1+\delta} (T-s)\right)+\frac{1}{1+\delta}}{{1+\frac {\delta (T-s) }{1+\delta}+\frac{\sqrt{1+\delta}}{\sinh\left(\sqrt{1+\delta} (T-s)\right)}+\frac{1+\delta^2}{(1+\delta)^{3/2}}\tanh\left(\frac{\sqrt{1+\delta} (T-s)}{2}\right)}}-1\right)\theta^2\\
&=\frac{1}{2}\mathcal V(T-s)\theta^2
\end{align*}
where the last two equalities follow from direct computations. 
\end{proof}

\section*{Acknowledgments}

I would like to thank the referees for their careful reading of the manuscript and for their insightful comments and suggestions, which have significantly improved the quality and clarity of the paper. 
Partially supported by the Israel Science Foundation (ISF) grant no 305/25.

\end{document}